\newcommand{\junk}[1]{}
\begin{document}
\title{On Truly Parallel Time in Population Protocols}

\author{
Artur Czumaj
\inst{1}
\thanks{Partially supported by the Centre for Discrete Mathematics and its Applications (DIMAP) and EPSRC award EP/V01305X/1.}
\and
Andrzej Lingas
\inst{2}
\thanks{Supported by Swedish Research Council grant 621-2017-03750.}
\institute{
  Department of Computer Science and Centre for Discrete Mathematics and its Applications (DIMAP), University of Warwick, Coventry CV4 7AL, United Kingdom.
  \texttt{A.Czumaj@warwick.ac.uk}
\and
  Department of Computer Science, Lund University,
22100 Lund, Sweden.
\texttt{Andrzej.Lingas@cs.lth.se}}
}
\maketitle

\begin{abstract}
  The {\em parallel time} of a population protocol is defined as the
  average number of required interactions that an agent in the
  protocol participates, i.e., the quotient between the total number
  of interactions required by the protocol and the total number $n$ of
  agents, or just roughly the number of required rounds with $n$
  interactions.  This naming triggers an intuition that at least on
  the average a round of $n$ interactions can be implemented in $O(1)$
  parallel steps.
  We show that when the transition function of a population protocol
  is treated as a black box then
  the expected maximum number of parallel steps
  necessary to implement a round of $n$ interactions
  is $\Omega (\frac {\log n}{\log \log n})$.
 We also provide a combinatorial argument for
 a matching upper bound on the number
 of parallel steps in the average case under additional assumptions.
 \end{abstract}

\section{Introduction}
In this paper we consider the model of probabilistic population
protocols. It was originally intended to model large systems of
agents with limited resources \cite{AA}.  In this model,
the agents are prompted to interact with one another towards a
solution of a common task.  The execution of a protocol in this model
is a sequence of pairwise interactions between
agents chosen uniformly at random
\cite{AA,AR,ER}.
During an interaction, each of the two agents, called the {\em initiator} and
the {\em responder} (the asymmetry assumed in \cite{AA}), updates its
state in response to the observed state of the other agent following
the predefined (global) transition function.  The efficiency of
population protocols is typically expressed in terms of the number of
states used by agents and the number of interactions required by
solutions (e.g., with high probability (w.h.p.) or in the expectation).
There is a vast literature on population protocols, especially for
 such basic problems as majority and leader
 election \cite{AR,BGK,ER,GS20}.

 In the literature on population protocols \cite{AR,ER,GS20},
 the concept of {\em parallel time}, which is the
 number of required interactions divided by the number $n$ of agents,
 is widely spread.  In other words, one divides the sequence of
 interactions in an execution of a population protocol into
 consecutive subsequences of $n$ interactions called rounds. Then one
 estimates the expected number of required rounds or the number of
 required rounds w.h.p.

 Population protocols for any non-trivial problem require
$\Omega(n \log n)$ interactions \cite{ER}. Hence, the expressions
resulting from dividing those on the number of interactions by $n$ are
not only simpler but also more focused on the essentials.  Fast
population protocols are commonly identified with those having
poly-logarithmic parallel time.  Also, for example, when improving a
polynomial upper bound on the number of interactions to $O(n\log n)$,
one can refer to the improvement as an exponential one in terms of the
parallel time, which sounds impressive.

 Clearly, the average number of interactions that an agent takes part
 is a lower bound on the actual parallel time when the transition function
 of a population protocol is a black box. However, calling this
 trivial lower bound for parallel time may mislead readers not
 familiar with or not recalling the definition.  They may start to
 believe that by the random choice of a pair of agents for each
 interaction in a round, there should be a lot of independent
 interactions in the round that could be implemented in
 parallel. Consequently, they could believe that the whole protocol
 could be implemented in parallel in time proportional to the number
 of rounds. Unfortunately, this intuition appears too optimistic.

 It is obvious that one can construct a sequence of $n$ interactions
 that requires $n$ parallel steps when the transition function of a
 population protocol is treated as a black box.  More importantly, we
 show that the expected maximum length of a dependency chain of
 interactions in a single round of $n$ interactions is $\Theta (\frac
 {\log n}{\log \log n})$.  The lower bound implies that when the
 transition function is treated as black box and the update of the
 states of interacting agents requires $\Omega (1)$ time steps then
 the expected maximum number of parallel steps necessary to implement
 a round of $n$ interactions is $\Omega (\frac {\log n}{\log \log
   n})$.
The upper bound opens for the possibility of a
matching, fast parallel implementation of a single round of $n$ interactions in
the average case under additional assumptions.

\section{A Lower Bound on Expected Parallel Time Required by a Round}

In each round $R$ of $n$ interactions, there are $2n$ participants
slots, so on the average each agent participates in two interactions
in $R$. Consider the {\em dependency}
directed acyclic graph (DAG) $D(R)$, where
vertices correspond to interactions in the round and two vertices
$v,\ u$ are
connected by the directed edge $(v,u)$ if and only if the interaction
corresponding to $v$ precedes the interaction corresponding to $u$
and the two interactions share at least one agent.
On the average, the dependence DAG $D(R)$ has at least a
linear number of edges. They can form long directed chains excluding
the possibility of an efficient implementation of the round in
parallel.

\begin{remark}
  There is a round $R$ of $n$ interactions
  such that the dependency DAG $D(R)$
  includes a directed path of length $n-1$
  (i.e., it has depth $\ge n-1$).
  In consequence, any implementation of the round
  (when the transition function is treated as black box
  and the update of the states of interacting
  agents takes one time step) 
  requires $n$
(parallel) time steps
\end{remark}
\begin{proof}
It is sufficient to let the $i$-th agent
participate in the $i$-th and $i+1$-th interactions for $i\le n-1$.
The dependency DAG of so specified round includes a directed
path of length $n-1$.
\qed
\end{proof}

Of course, the round specified in Remark 1 yielding a dependency
path of linear length is highly unlikely.
However even in the average case, the maximal length of
a dependency path is at least almost logarithmic in $n$.

\begin{theorem}\label{theo: lower}
  The expected maximum length of a directed path  in the dependency DAG
  of a round of $n$ interactions is $\Omega (\frac {\log n}{\log \log n})$.
  Consequently, when the transition function is treated
  as black box and the update of the states of interacting
  agents requires $\Omega (1)$ time steps then
  the expected number of parallel time steps
  required to implement the round
  is $\Omega (\frac {\log n}{\log \log n})$.
\end{theorem}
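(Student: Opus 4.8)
The plan is to reduce the statement to a classical balls-into-bins estimate. The key observation is that a single agent participating in many interactions automatically creates a long directed path: if an agent $a$ takes part in interactions occurring at times $t_1 < t_2 < \cdots < t_\ell$ within the round, then each consecutive pair is time-ordered and shares the agent $a$, so $t_1 \to t_2 \to \cdots \to t_\ell$ is a directed path of length $\ell-1$ in $D(R)$. Hence the maximum path length is at least $L-1$, where $L$ denotes the maximum, over all agents, of the number of interactions in which that agent participates. It therefore suffices to prove $\mathbb{E}[L] = \Omega(\log n/\log\log n)$.

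To get clean independence I would look only at the initiator of each interaction. Since each interaction selects an ordered pair of distinct agents uniformly at random, the initiator is uniform over all $n$ agents, and the initiators of the $n$ interactions are mutually independent. Thus the initiator counts $(X_a)_a$ are exactly the bin loads of a balls-into-bins process with $n$ balls and $n$ bins, and $L \ge \max_a X_a$. The task reduces to showing that the maximum load of $n$ balls thrown into $n$ bins is $\Omega(\log n/\log\log n)$ in expectation, which is classical and could be cited; I would instead give the self-contained second-moment argument below.

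Fix $k = \lfloor c\,\log n/\log\log n\rfloor$ for a constant $c\in(0,1)$ and let $N = \sum_a \mathbf{1}[X_a \ge k]$ count the overloaded bins. Since $X_a \sim \mathrm{Bin}(n,1/n)$ and $k = o(\sqrt n)$, a direct estimate gives $\Pr[X_a \ge k] \ge \Pr[X_a = k] = \binom{n}{k} n^{-k}(1-1/n)^{n-k} = \Theta(1/k!)$, and using $\log k! = k\log k\,(1+o(1)) = c\log n\,(1+o(1))$ one checks that $\mathbb{E}[N] = n\,\Pr[X_a \ge k] = \Omega\!\left(n^{1-c(1+o(1))}\right) \to \infty$ for every $c<1$. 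Because bin loads are negatively associated and $\mathbf{1}[X_a\ge k]$ is a nondecreasing function of $X_a$, the indicators are negatively associated, so $\operatorname{Var}(N) \le \mathbb{E}[N]$; Chebyshev's inequality then yields $\Pr[N=0] \le 1/\mathbb{E}[N] \to 0$. Hence $\Pr[L \ge k] \to 1$ and $\mathbb{E}[L] \ge k\,(1-o(1)) = \Omega(\log n/\log\log n)$.

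Combining the two parts, the expected maximum path length in $D(R)$ is at least $\mathbb{E}[L]-1 = \Omega(\log n/\log\log n)$. The consequence for parallel time is then immediate: interactions on a common directed path must be executed one after another, since each reads or updates the state of a shared agent that its predecessor modified, so a path of length $\ell$ forces $\Omega(\ell)$ parallel steps once every state update costs $\Omega(1)$; taking expectations gives the bound. I expect the main obstacle to be the second-moment step — specifically, verifying $\mathbb{E}[N]\to\infty$ for constants $c$ arbitrarily close to $1$ and cleanly invoking negative association (or replacing it by an explicit covariance computation, or by a Poisson approximation) to rule out $N=0$ with high probability.
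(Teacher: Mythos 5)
Your proof is correct, and it follows the same high-level reduction as the paper: both arguments observe that an agent participating in $\ell$ interactions yields a directed (dependency) path of length $\ell-1$, so it suffices to lower-bound the expected maximum number of interactions per agent via a balls-into-bins analysis. The difference lies in how the balls-into-bins core is handled. The paper throws all $2n$ participant slots as balls into $n$ bins and cites Gonnet's theorem for the expected maximum load, $\Gamma^{(-1)}(2n)-\frac32+o(1) = \Theta\bigl(\frac{\log n}{\log\log n}\bigr)$; this forces it to patch a technicality (the two balls of one interaction landing in the same bin, which does not correspond to a valid interaction). Your version is cleaner on both counts: by tracking only the \emph{initiator} of each interaction you get exactly $n$ independent, uniform balls into $n$ bins with no self-loop issue to excuse away, and in place of the citation you give a self-contained second-moment argument (binomial point-mass estimate $\Pr[X_a = k]=\Theta(1/k!)$, negative association of the overload indicators to get $\operatorname{Var}(N)\le \mathbb{E}[N]$, then Chebyshev). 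The computations check out: $\mathbb{E}[N]=n^{1-c-o(1)}\to\infty$ for any fixed $c<1$, and negative association of multinomial bin loads, preserved under coordinatewise nondecreasing functions, is a standard fact. One small remark: your closing worry about pushing $c$ arbitrarily close to $1$ is unnecessary — any fixed $c\in(0,1)$, say $c=\frac12$, already gives the claimed $\Omega\bigl(\frac{\log n}{\log\log n}\bigr)$ bound, since the constant is absorbed in the $\Omega$. What you lose relative to the paper is only the sharp constant in front of $\frac{\log n}{\log\log n}$ (Gonnet's result pins down the leading term exactly), which is irrelevant for the theorem as stated; what you gain is a proof with no external dependencies and no case analysis for degenerate allocations.
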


\begin{proof}
\junk{
Consider a sequence $S$ of $n$ pairwise interactions between the $n$ agents picked uniformly at random. We shall show that the expected maximum number of interactions in $S$ that a single agent participates is $\Omega(\frac{\log n}{\log \log n})$. To prove this we shall assume the following ball-bin model,
where $[r]$ stands for the set of positive integers not greater than $r$. We have $2n$ balls, the balls numbered $2k-1,\ 2k$, $k\in [n]$, correspond to the $k$-th interaction in $S$, and $n$ bins are in one-to-one correspondence with the $n$ agents. Allocating the balls numbered $2k-1,\ 2k$ into two distinct bins $A$ and $B$ specifies the interaction between the agents corresponding to the bins $A$ and $B$. If $A=B$ then the $k$-th interaction is not specified in this model. The probability of the latter event is however very small. By \cite{Go}, the expected maximum load of a bin in our model is $\Gamma^{(-1)}(n)-\frac 32+o(1)$, where $\Gamma$ is Euler's gamma function, it is known that $\Gamma^{(-1)}(n) = \frac{\log n}{\log \log n}(1+o(1))$. We may not exclude that the bin with the maximum load contains pairs of consecutive balls corresponding to the same interaction (which cannot be specified). However, the probability that a ball is allocated to the same bin as the previous one is only $\frac1n$. Therefore, the expected maximum load of a bin where no two balls correspond to the same interaction is still $\Omega (\frac {\log n}{\log \log n})$. Hence, the expected maximum number of interactions that the same agent participates in a round of $n$ interactions is $\Omega(\frac {\log n}{\log \log n})$.
\qed
}
Consider a sequence $S$ of $n$ pairwise interactions between the $n$ agents picked uniformly at random. We shall show that the expected maximum number of interactions in $S$ that a single agent participates is $\Omega(\frac{\log n}{\log \log n})$. To prove this we shall assume the following \emph{balls-into-bins} model. For any natural number $r$, let $[r] := \{1,\dots, r\}$. We have $2n$ balls, where for any $k \in [n]$, the balls numbered $2k-1$ and $2k$ correspond to the $k$-th interaction in $S$, and $n$ bins are in one-to-one correspondence with the $n$ agents. Allocating the balls numbered $2k-1,\ 2k$ into two distinct bins $A$ and $B$ specifies the interaction between the agents corresponding to the bins $A$ and $B$. If $A = B$ then the $k$-th interaction is not specified in this model.
Since the pairwise interactions are performed between the $n$ agents picked uniformly at random, the destinations of the balls are random. Therefore, by \cite{Go}, the expected maximum load of a bin in our model is
$\Gamma^{(-1)}(2n) - \frac32 + o(1)$, where $\Gamma$ is Euler's gamma function and it is known that $\Gamma^{(-1)}(n) = \frac{\log n}{\log \log n}(1+o(1))$. Hence, in expectation, there is an agent involved in at least $\frac{\log n}{\log \log n}(1+o(1))$ interactions. We may not exclude that the bin with the maximum load contains pairs of consecutive balls corresponding to the same interaction (which cannot be specified). However, the probability that a ball is allocated to the same bin as the previous one is only $\frac1n$. Therefore, the expected maximum load of a bin where no two balls correspond to the same interaction is still $\Omega(\frac{\log n}{\log \log n})$. Hence, the expected maximum number of interactions that the same agent participates in a round of $n$ interactions is $\Omega(\frac{\log n}{\log \log n})$.
\qed
\end{proof}

\section{An Upper Bound on Expected Maximum Length of a Dependency Chain in a Round}


  The bound in Theorem~\ref{theo: lower} follows from the fact
that one expects that at least one agent will be involved in
$\Omega(\frac{\log n}{\log \log n})$ interactions,
which immediately implies that the expected maximum length of a
directed path in the dependency DAG of a round of $n$ interactions is
$\Omega(\frac{\log n}{\log \log n})$. However, if one considers
concurrently more agents, then perhaps the expected maximum length of a
directed path in the dependency DAG can be significantly larger, that is
$\omega(\frac{\log n}{\log \log n})$? 
In this section we prove that
this is not the case, implying that the lower bound in Theorem~\ref{theo:
  lower} is asymptotically tight.

In order to derive our upper bound on the expected maximum length of a directed
path in the dependency DAG of a round consisting of $n$ interactions, we shall
identify interactions with labeled edges in $K_n$. To model directed
paths in the dependency DAG of a round of $n$ interactions, we need
the following concept.

An \emph{interference path} of length $k$ is any sequence of edges $e_1, \dots, e_k$ such that $e_i \cap e_{i+1} \ne \emptyset$ for every $1 \le i < k$.

We will consider labeled undirected multigraphs, where each edge has a unique label. We say an \emph{interference path is monotone} if the labels on the interference path form a \emph{strictly increasing sequence}.

\begin{theorem}\label{theo: upper}
Let $c$ be an arbitrary positive constant and let $n$ be a sufficiently large integer. Consider the process of selecting $n$ edges labeled $1, \dots, n$ in $K_n$ independently and uniformly at random\footnote{That is, we run the following process:
\begin{description}
\item[\quad $\triangleright$] \textbf{for $t = 1$ to $n$ do}:
\begin{itemize}
\item choose distinct $i$ and $j$ independently and uniformly at random from $[n] := \{1,\dots,n\}$;
\item assign label $t$ to edge $\{i,j\}$.
\end{itemize}
\end{description}
}.
Then, for $k = \lceil\frac{(3+c) \log n}{\log\log n}\rceil$, with probability at least $1 - \frac{1}{n^c}$, the obtained multigraph has no monotone interference path of length $k$.
\end{theorem}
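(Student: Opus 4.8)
The plan is to bound the expected number of monotone interference paths of length $k$ and apply the first-moment method. Observe that a monotone interference path of length $k$ is completely determined by the choice of its $k$ labels: since every label in $[n]$ is assigned to exactly one edge, such a path corresponds to a set $\{t_1 < t_2 < \cdots < t_k\} \subseteq [n]$ for which the random edges $e^{(t_1)}, \dots, e^{(t_k)}$ (where $e^{(t)}$ denotes the edge carrying label $t$) satisfy $e^{(t_i)} \cap e^{(t_{i+1})} \neq \emptyset$ for all $1 \le i < k$. Hence, letting $N$ denote the number of monotone interference paths of length $k$,
\[
  \mathbb{E}[N] = \binom{n}{k}\cdot \Pr\big[e^{(t_1)}, \dots, e^{(t_k)} \text{ form an interference path}\big],
\]
where the probability is identical for every choice of labels, because the edges $e^{(1)}, \dots, e^{(n)}$ are mutually independent and each is uniform over the $\binom{n}{2}$ edges of $K_n$.

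The key step is to evaluate this probability exactly. For any fixed edge $f = \{a,b\}$, the number of edges of $K_n$ sharing a vertex with $f$ is $2(n-1) - 1 = 2n - 3$, so an independent uniform random edge meets $f$ with probability $p := \frac{2n-3}{\binom{n}{2}}$, \emph{independently of which edge $f$ is}. Revealing $e^{(t_1)}, \dots, e^{(t_k)}$ one at a time and conditioning on the edges already seen, the event $e^{(t_i)} \cap e^{(t_{i+1})} \neq \emptyset$ has conditional probability exactly $p$ regardless of the history, since it constrains only the fresh edge $e^{(t_{i+1})}$ against the fixed edge $e^{(t_i)}$. The chain rule then gives that the edges form an interference path with probability exactly $p^{k-1}$, so
\[
  \mathbb{E}[N] = \binom{n}{k}\, p^{k-1}.
\]

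It remains to show $\mathbb{E}[N] \le n^{-c}$ for $k = \lceil \frac{(3+c)\log n}{\log\log n}\rceil$. Using $\binom{n}{k} \le n^k/k!$ and $p \le \frac{4}{n-1}$, a short calculation gives $\mathbb{E}[N] \le 2n\cdot 4^{k}/k!$ (the factor $(n/(n-1))^{k-1} = 1+o(1)$ since $k = o(n)$). Taking logarithms and bounding $k!$ from below by Stirling's inequality $k! \ge (k/e)^k$ yields
\[
  \log \mathbb{E}[N] \le \log(2n) + k\log 4 + k - k\log k.
\]
The decisive term is $k\log k$: since $\log k = \log\log n - \log\log\log n + O(1) = (1-o(1))\log\log n$, the denominator $\log\log n$ in $k$ cancels, giving $k\log k \ge (3+c)(1-o(1))\log n$, while $\log(2n) = (1+o(1))\log n$ and $k\log 4 + k = o(\log n)$. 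Altogether $\log\mathbb{E}[N] \le -(2+c)\log n + o(\log n) \le -c\log n$ for all sufficiently large $n$, i.e. $\mathbb{E}[N] \le n^{-c}$. By Markov's inequality the probability that at least one monotone interference path of length $k$ exists is at most $\mathbb{E}[N] \le n^{-c}$, which is the claimed bound.

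The main obstacle is the asymptotic bookkeeping in the last paragraph, namely verifying that the $\log\log n$ in the denominator of $k$ exactly cancels the $\log k \sim \log\log n$ produced by Stirling, so that $k\log k$ contributes a clean $\Theta(\log n)$ term of the right magnitude. The generous leading constant $3$ in the definition of $k$ is precisely what makes this robust: it leaves a margin of $2\log n$ that comfortably absorbs both the $\log n$ coming from $\binom{n}{k}$ (through $\log(2n)$) and all the lower-order $o(\log n)$ corrections from Stirling, so only crude estimates are needed rather than sharp ones.
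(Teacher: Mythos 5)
Your proof is correct and follows essentially the same route as the paper's: a first-moment (union-bound) argument over monotone interference paths, with the per-path probability coming from the independence of the edge choices, followed by the same Stirling-type estimate showing $k!$ overwhelms the polynomial factors. The only difference is bookkeeping: you parametrize paths by their label sets and compute $\mathbb{E}[N] = \binom{n}{k}p^{k-1}$ exactly via the chain rule, whereas the paper enumerates edge-sequences and label sets separately and multiplies by $\binom{n}{2}^{-k}$ --- these yield identical quantities, since $\binom{n}{2}(2n-3)^{k-1}\binom{n}{2}^{-k} = p^{k-1}$.
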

\begin{proof}
  The proof is by simple counting arguments. Let $G$ be the (random) multigraph
  constructed by our process. $G$ has $n$ vertices, $n$ edges (possibly with repetitions), and each edge has a distinct label from $[n]$.

Let $\mathcal{IP}_k$ be the set of all possible labeled interference paths of length $k$ ($k \ge 1$) in $K_n$ with distinct labels in $[n]$, that is,
\begin{align*}
    \mathcal{IP}_k =
    \bigg\{\langle e_1, \dots, e_k; L \rangle:
        \forall_{1 \le i \le k}\ |e_i| = 2,
        \forall_{1 \le i \le k}\ e_i \subseteq [n],
            \\
        \forall_{1 \le i < k}\ e_i \cap e_{i+1} \ne \emptyset,
        L \subseteq [n], \text{ and }
        |L| = k
    \bigg\}
    \enspace.
\end{align*}
The meaning here is that $\langle e_1, \dots, e_k; L \rangle$ corresponds to the interference path with edges $e_1, \dots, e_k$ and with labels such that $e_i$ has label equal to the $i$-th smallest element from $L$.

Let us observe that
\begin{align}
\label{ineq:bound-size}
    |\mathcal{IP}_k| &\le
        \binom{n}{2} \cdot (2n-3)^{k-1} \cdot \binom{n}{k} \le
        n^2 \cdot (2n)^{k-1} \cdot \frac{n^k}{k!} =
        \frac{2^{k-1} \cdot n^{2k+1}}{k!}
    \enspace.
\end{align}
Indeed, we can choose any of the $\binom{n}{2}$ pairs of distinct vertices as the first edge, and then to select the $(i+1)$-st edge, we have one of the two vertices from the $i$-th edge together with one other vertex. As for the labels, they can be assigned as any subset of $[n]$ of size~$k$.

Let us take an arbitrary interference path $P = \langle e_1, \dots, e_k; L \rangle \in \mathcal{IP}_k$. Let $L = \{\chi_1, \dots, \chi_k\}$ with $\chi_i < \chi_{i+1}$ for $1 \le i < k$. For $P$ to exist in $G$, for every $1 \le i \le k$, the process must have chosen edge $e_i$ in step $\chi_i$ of the algorithm. The probability for that to happen is equal to $\frac{1}{\binom{n}{2}}$ for every $1 \le i \le k$. All the probabilities are independent for different $i$, and therefore if we let $X_P$ be the indicator random variable that $P$ is a monotone interference path in $G$, then (for $n \ge 2$)
\begin{align}
\label{ineq:bound-prob}
    \Pr[X_P = 1] &=
    \left(\frac{1}{\binom{n}{2}}\right)^k =
    \frac{2^k}{n^k (n-1)^k} \le
    \left(\frac{2}{n}\right)^{2k}
    \enspace.
\end{align}

Let $\mathcal{E}_k$ be the random event that $G$ has a monotone interference path of length $k$. By inequalities (\ref{ineq:bound-size}) and (\ref{ineq:bound-prob}), and by the union bound, we obtain the following,
\begin{align*}
    \Pr[\mathcal{E}_k] &=
    \Pr[\sum_{P \in \mathcal{IP}_k} X_P > 0] \le
    \sum_{P \in \mathcal{IP}_k} \Pr[X_P > 0] =
    |\mathcal{IP}_k| \cdot \left(\frac{1}{\binom{n}{2}}\right)^k
        \\
        & \le
    \frac{2^{k-1} \cdot n^{2k+1}}{k!} \cdot \left(\frac{2}{n}\right)^{2k}
        \le
    \frac{8^k \cdot n}{k!}
    \enspace.
\end{align*}

Finally, we use the fact that for Euler's gamma function (which for any positive integer $N$ satisfies $\Gamma(N) = (N-1)!$) we have $\Gamma^{(-1)}(N) = \frac{(1+o(1)) \cdot \log N}{\log\log N}$. Therefore, assuming $n$ is sufficiently large, if we take an arbitrary positive $c$ and in the bound above make $k \ge \frac{(c+3) \cdot \log n}{\log\log n}$ with $k = o(\log n)$, then we obtain
\begin{align*}
    \Pr[\mathcal{E}_k] & \le
    \frac{8^k \cdot n}{k!} \le
    \frac{o(n^2)}{\omega(n^{c+2})} =
    o(n^{-c})
    \enspace.
\end{align*}

Since $\mathcal{E}_k$ is the event that $G_{n,p}$ has a monotone interference path of length $k$, the bound above implies that with probability at least $1 - n^{-c}$ the random labeled graph $G_{n,p}$ has no monotone interference path of length $k = \lceil\frac{(c+3) \cdot \log n}{\log\log n}\rceil = \Theta\left(\frac{\log n}{\log\log n}\right)$.
\qed
\end{proof}

Note that monotone interference paths of length $k$ in the multigraph
in Theorem \ref{theo: upper} are in one-to-one correspondence with
directed paths of length $k$ in the dependency DAG of a round of $n$
interactions.
\junk{Hence, the expected depth of the DAG is $O(\frac {\log
  n}{\log \log n})$ and we obtain the following corollary from Theorem
\ref{theo: upper} by implementing the interactions on the same level
of the DAG in a single parallel step.
For $i = 0, 1, 2, \dots$, let the $i$-th level of the DAG  denote
  the set of its vertices (i.e., interactions) whose
  maximum distance to a source vertex (i.e., a vertex of indegree $0$)
  is $i$. It follows that the number of levels is $O(\frac {\log
  n}{\log \log n})$. Hence, we obtain the following corollary from Theorem
\ref{theo: upper} by implementing the interactions on the same level
of the DAG in $O(1)$ parallel step.

\begin{corollary}\label{cor: upper}
The expected maximum length of a directed path in the dependency DAG of a round of $n$ interactions is $O\left(\frac{\log n}{\log \log n}\right)$. Consequently, if the decomposition of the DAG into its levels is given and the update of the states of interacting agents takes $O(1)$ time steps then the expected number of parallel time steps required to implement a round of $n$ interactions is $O\left(\frac{\log n}{\log\log n}\right)$.
\end{corollary}}
Hence, we obtain the following corollary from Theorem~\ref{theo: upper}.

\begin{corollary}\label{cor: upper}
The expected maximum length of a directed path  in the dependency DAG of a round of $n$ interactions is $O\left(\frac{\log n}{\log \log n}\right)$.
\end{corollary}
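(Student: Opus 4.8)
The plan is to upgrade the tail estimate established inside the proof of Theorem~\ref{theo: upper} into a bound on the expectation. Write $L$ for the random variable equal to the maximum length of a directed path in the dependency DAG of a round of $n$ interactions; by the one-to-one correspondence recorded just after Theorem~\ref{theo: upper}, $L$ coincides with the maximum length of a monotone interference path in the random labeled multigraph $G$ produced by the process. The crucial point is that the union-bound estimate obtained in that proof,
\[
\Pr[L \ge k] \;=\; \Pr[\mathcal{E}_k] \;\le\; \frac{8^{k} \cdot n}{k!},
\]
is valid for \emph{every} integer $k$ with $1 \le k \le n$ (the side condition $k = o(\log n)$ was invoked only to pass to the clean closed form at the very end), and that $L \le n$ always, since a monotone path uses distinct labels from $[n]$.

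First I would apply the identity $E[L] = \sum_{k=1}^{n}\Pr[L \ge k]$ and split the sum at the threshold $k_0 := \lceil (3+c)\log n/\log\log n\rceil$ of Theorem~\ref{theo: upper}, fixing any constant $c>0$; note that $k_0 = \Theta(\log n/\log\log n)$ and $k_0 = o(\log n)$. For the head $1 \le k \le k_0$ I bound each probability trivially by $1$, so that part contributes at most $k_0 = O(\log n/\log\log n)$. For the tail $k_0 < k \le n$ I substitute $\frac{8^{k}n}{k!}$.

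The heart of the argument is to show this tail is negligible. Here I would use that the consecutive-term ratio equals $\frac{8}{k+1}$, which is at most $\tfrac12$ once $k \ge k_0$ (valid for $n$ large, since $k_0 \to \infty$). Hence the tail is dominated by a geometric series anchored at its first term, $\sum_{k_0 < k \le n}\frac{8^{k}n}{k!} \le 2\cdot\frac{8^{k_0}n}{k_0!}$, and the factor $\frac{8^{k_0}n}{k_0!}$ is exactly the quantity that Theorem~\ref{theo: upper} already bounds by $o(n^{-c})$. Thus the whole tail is $o(1)$, and combining the two parts yields $E[L] \le k_0 + o(1) = O(\log n/\log\log n)$.

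The step I expect to be the real obstacle is conceptual rather than computational: turning a single-$k$ high-probability statement into an expectation bound requires tail control across the entire range up to $k = n$, because a priori the (exponentially unlikely) event of a near-linear path could dominate $E[L]$. The resolution is to observe that the proof of Theorem~\ref{theo: upper} actually furnishes the per-$k$ bound $\frac{8^{k}n}{k!}$ for all admissible $k$, not merely at the chosen cutoff, and that its super-geometric factorial decay past $k_0$ makes the summation over $k_0 < k \le n$ collapse to $o(1)$. Pinning down the uniform ratio bound $\frac{8}{k+1} \le \tfrac12$ on this range and invoking the theorem's own $o(n^{-c})$ estimate at $k_0$ are the two places that must be made precise.
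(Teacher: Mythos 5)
Your proposal is correct. The paper itself gives no spelled-out proof of the corollary: it simply asserts that it follows from Theorem~\ref{theo: upper}, the implicit argument being the crude black-box conversion $E[L] \le k_0 \cdot \Pr[L < k_0] + n \cdot \Pr[L \ge k_0] \le k_0 + n \cdot n^{-c} = k_0 + n^{1-c}$, which is $O(\log n / \log\log n)$ once one fixes $c \ge 1$ (the trivial bound $L \le n$ on the bad event suffices because the failure probability is polynomially small). Your route is a refinement of this: instead of using the theorem as a black box at the single cutoff $k_0$, you re-open its proof, observe correctly that the union-bound estimate $\Pr[L \ge k] \le 8^k n / k!$ holds for every $1 \le k \le n$ (the condition $k = o(\log n)$ enters only in the final simplification), and then sum $E[L] = \sum_{k=1}^{n} \Pr[L \ge k]$, controlling the tail past $k_0$ by geometric domination via the ratio $8/(k+1) \le 1/2$. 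Both arguments are sound; yours is more self-contained and works for any $c > 0$ rather than requiring $c \ge 1$, at the cost of invoking the internals of the theorem's proof rather than its statement. Your worry that the near-linear-path event could a priori dominate the expectation is exactly the point the paper glosses over, and either resolution --- your factorial-decay summation or the simpler $n \cdot n^{-c}$ bound --- settles it.
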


For $i = 0, 1, 2, \dots$, let the $i$-th level of the DAG  denote the set of its vertices (i.e., interactions) whose maximum distance to a source vertex (i.e., a vertex of indegree 0) is $i$. It follows that the number of levels is $O(\frac{\log n}{\log \log n})$. Consequently, if the decomposition of the DAG into its levels is given and the update of the states of interacting agents takes $O(1)$ time steps, then the expected number of parallel time steps required to implement a round of $n$ interactions is $O \left( \frac{\log n}{\log\log n}\right)$.
\junk{
\section{Model of $k$-parallel Interactions}

The concept of {\em $k$-parallel interaction} is a natural
generalization of that of an interaction which is just
a $1$-parallel interaction.

\begin{definition}
  Let $k\in [\lfloor n/2 \rfloor]$. A $k$-parallel
  interaction is a set of $k$ pairwise
  (i.e., between two agents) interactions such
  that the total number of agents participating in
  these $k$ pairwise interactions is $2k$.
\end{definition}

In other words, a $k$-parallel interaction is a
matching of cardinality $k$
in the complete (directed) graph on the $n$ agents.

Assuming that a $k$-parallel interaction can be implemented
in $O(1)$ parallel steps, the following remark follows
immediately from the definition.
\junk{
theorem  lies ground
to fast parallel implementations of population
protocols for basic problems in population protocols.

in the model of $k$-parallel
interactions. It is a straightforward generalization
of Proposition in \cite{ER}.

\begin{proposition}
  Let $k\in [\leftfloor n/2 \rfloor]$.
  All agents interact at least once during
  $O(\frac {n\logn}k)$ $k$-parallel interactions (picked uniformly
  at random) w.h.p.
\end{proposition}

\begin{proposition}
  Let $k\in [\lfloor n/2 \rfloor]$.
  For all $C > 0$ and $0 < \delta < 1$, during a period of
  $C\frac {n\log n}k$ $k$-parallel interactions (picked uniformly
  at random)
with probability at least $1-n^{-O(\delta^2C)}$ , each agent participates in at least $2C(1-
 \delta) \log n$ and at most $2C(1+\delta) \log n$ interactions.
\end{proposition}
}
\begin{remark}
  Let $n$ be the number of agents.
  A population protocol using $t(n)$ interactions in a form of $\lceil t(n)/k\rceil$
  consecutive $k$-parallel interactions
  (with probability $1$ or w.h.p., or in the expectation) can be implemented
 in $O(t(n)/k)$ parallel steps.
\end{remark}
\junk{
\begin{proof}
  Consider an arbitrary interaction $I$ of the protocol.
  The probability that $I$ does not appear as one of $k$ consecutive
  interactions of the protocol is $(1-\frac 1 {n(n-1)})^k$.
  On the other hand, the probability that $I$
  does not appear in a $k$-parallel interaction
  is $\prod_{i=0}^{k-1}(1-\frac 1 {(n-2i)(n-2i-1)})$.
  As the former probability is larger than the latter,
  we conclude that the probability that $I$ appears
  in a $k$-parallel interaction is larger than
  that it appears in $k$ consecutive interactions.
  It follows that a sequence of $\lceil t(n)/k \rceil$
  $k$-parallel interactions yielding a desired
  result of the protocol is not less likely
  than a sequence of $t(n)$ interactions yielding the desired result.
  This completes the proof.
  \qed
  \end{proof}}
}
\section{Final Remarks}

Observe that the lower bound of Theorem \ref{theo: lower} holds also with high probability, as does the upper bound of Theorem \ref{theo: upper}.

Our almost logarithmic lower bound on the expected maximum length of a
dependency chain in the dependency DAG of a round in Theorem
\ref{theo: lower} is implied by the lower bound on the expected
maximum number of interactions sharing a single agent in a round of
$n$ interactions.  It is a bit surprising that our upper bound on the
expected maximum length of a dependency chain in the DAG of the round
asymptotically matches the aforementioned lower bound.  For example,
in the round constructed in the proof of Remark 1, each agent takes
part in $O(1)$ interactions but the DAG of the round contains a
dependency chain of length $n-1$ !

The problem of  estimating the expected depth of random
circuits raised and studied by Diaz et al. in \cite{DSS94} seems
closely related. The motivation of Diaz et al. \cite{DSS94}
was an estimation of how quickly a random circuit could be
evaluated in parallel. Arya et al. improved the results of
\cite{DSS94} by providing tight $\Theta (\log n)$ bounds
on the expected depth of random circuits in \cite{AGM}.
Their improved results rely on Markov chain techniques.

One can generalize the concept of an interaction between two agents to
include that of a $k$-parallel interaction defined as a sequence of
$k$ mutually independent interactions involving $2k$ agents totally.
Then, a sequence of $t(n)$ interactions composed of $\lceil t(n)/k
\rceil $ consecutive $k$-parallel interactions can be implemented
in $O(t(n)/k)$ parallel steps.  The related problem of designing a
fast parallel randomized method of drawing $k$ disjoint pairs of
agents uniformly at random is also of interest in its own rights.  In
a recent paper \cite{BH}, Berenbrink et al. provide a method of forming
several matchings between agents in order to simulate population
protocols efficiently in parallel.

\section*{Acknowledgments}
The second author is grateful to Leszek G\k{a}sieniec, who posed the
problem studied in this paper already in 2020, for an introduction to
population protocols and to him, Jesper Jansson and Christos
Levcopoulos for some discussions.

\end{document}